\documentclass[pre,10pt,twocolumn]{revtex4}%
\usepackage{amsfonts}
\usepackage{amsmath}
\usepackage{amssymb}
\usepackage{graphicx}%
\setcounter{MaxMatrixCols}{30}
\newtheorem{theorem}{Theorem}

\newtheorem{lemma}[theorem]{Lemma}

\newenvironment{proof}[1][Proof]{\noindent\textbf{#1.} }{\ \rule{0.5em}{0.5em}}
\begin{document}
\title[Quantum Networks on Cubelike Graphs]{Quantum Networks on Cubelike Graphs}
\author{Anna Bernasconi}
\affiliation{Dipartimento di Informatica, Universit\`{a} degli Studi di Pisa, 56127 Pisa, Italy}
\author{Chris Godsil}
\affiliation{Department of Combinatorics \& Optimization, University of Waterloo, Waterloo
N2L 3G1, ON Canada}
\author{Simone Severini}
\affiliation{Institute for Quantum Computing and Department of Combinatorics \&
Optimization, University of Waterloo, Waterloo N2L 3G1, ON Canada}

\begin{abstract}
Cubelike graphs are the Cayley graphs of the elementary abelian group
$\mathbb{Z}_{2}^{n}$ (\emph{e.g.}, the hypercube is a cubelike graph). We
study perfect state transfer between two particles in quantum networks modeled
by a large class of cubelike graphs. This generalizes results of Christandl
\emph{et al. }[\emph{Phys. Rev. Lett.} \textbf{92}, 187902 (2004)] and Facer
\emph{et al. }[\emph{Phys. Rev. A} \textbf{92}, 187902 (2008)].

\end{abstract}
\maketitle

\section{Introduction}

In view of applications like the distribution of cryptographic keys \cite{be,
ek} or the communication between registers in quantum devices \cite{bl, ki},
the study of natural evolution of permanently coupled spin networks has become
increasingly important. A special case of interest consists of homogenous
networks of particles coupled by constant and fixed (nearest-neighbour) interactions.

An important feature of these networks is the possibility of faithfully
transferring a qubit between specific particles without tuning the couplings
or altering the network topology. This phenomenon is usually called
\emph{perfect state transfer} (PST). Since quantum networks (and communication
networks in general) are naturally associated to undirected graphs, there is a
growing amount of literature on the relation between graph-theoretic
properties and properties that allow PST (see \cite{fe, god, j1,ja,sax}).

In the present paper we will give necessary and sufficient conditions for PST
in quantum networks modeled by a large class of \emph{cubelike graphs}. The
vertices of a cubelike graph are the binary $n$-vectors; two vertices $u$ and
$v$ are adjacent if and only if their symmetric difference belongs to a chosen
set. Equivalently, cubelike graphs on $2^{n}$ vertices are the Cayley graphs
of the elementary abelian group of order $2^{n}$ \cite{lo, chu}.

Among cubelike graphs, the hypercube is arguably the most famous one, having
many applications ranging from switching theory to computer architecture,
\emph{etc. }(see, \emph{e.g.}, \cite{le, re1}). There are various and diverse
results about quantum dynamics on hypercubes. These are essentially embraced
by two areas: continuous-time quantum walks \cite{qw}; quantum communication
in spin networks \cite{soug}. The common ingredient is the use of a
Hamiltonian representing the adjacency structure of the graph.

Concerning state transfer, Christandl \emph{et al.} \cite{ch} have shown that
networks modeled by hypercubes are capable of transporting qubits between
pairs of antipodal nodes, perfectly (\emph{i.e.}, with maximum fidelity) and
in constant time. Facer \emph{et al. }\cite{fa} generalized this observation,
by considering a family of cubelike graphs whose members have the hypercube as
a spanning subgraph (for this reason, these authors coined the term
\emph{dressed hypercubes}). Other questions related to quantum dynamics on
hypercubes have been addressed in \cite{al, chi, kro, lo1, moo}.

The paper is organized as follows: Section 2 contains the necessary
definitions and the statements of our results. A proof will be given in
Section 3. This is obtained by diagonalizing the Hamiltonians with simple
tools from Fourier analysis on $\mathbb{Z}_{2}^{n}$.

\section{Set up and results}

Let $\mathbb{Z}_{2}^{n}$ be the additive abelian group $(\mathbb{Z}%
_{2})^{\times n}$. Each element of $\mathbb{Z}_{2}^{n}$ is represented as a
binary vector of length $n$. The \emph{zero vector} $\mathbf{0}$ is made up of
all $0$'s. Let $f:\mathbb{Z}_{2}^{n}\rightarrow\mathbb{Z}_{2}$ be a Boolean
function on $n$ variables and let $\Omega_{f}=\{w\in\mathbb{Z}_{2}%
^{n}\ |\ f(w)=1\}$. Let $d=|\Omega_{f}|$ be the number of vectors
$w\in\mathbb{Z}_{2}^{n}$ such that $f(w)=1$. Finally, if $w$ and $v$ are two
binary vectors of the same length, then $w\oplus v$ denotes the vector
obtained by computing their elementwise addition modulo $2$, and $w^{T}v$
their scalar product.

The \emph{Cayley graph} $X(\Gamma,T)$ of a group $\Gamma$ \emph{w.r.t.} the
set $T\subseteq\Gamma$ ($T=T^{-1}$) is the graph with vertex-set
$V(X)=\{\Gamma\}$ and an edge $\{g,h\}\in E(X)$, if there is $s\in T$ such
that $gs=h$. The set $T$ is also called \emph{Cayley set}. The Cayley graphs
of the form $X(\mathbb{Z}_{2}^{n},\Omega_{f})$ are called \emph{cubelike
graphs}. Some cubelike graphs are illustrated in Figure 1 below.%

\begin{figure}
[h]
\begin{center}
\includegraphics[
height=1.2341in,
width=3.4904in
]%
{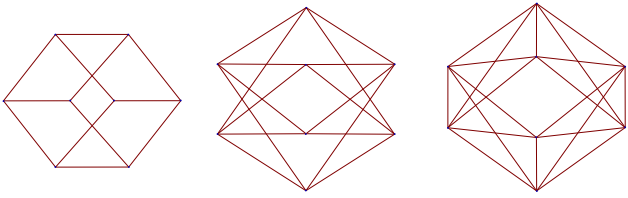}%
\caption{Drawings of three nonisomorphic cubelike graphs on $8$ vertices. }%
\end{center}
\end{figure}

Notice that this definition embraces every possible set $\Omega_{f}$. When $f$
is the characteristic function of the standard generating set of
$\mathbb{Z}_{2}^{n}$, the graph $X(\mathbb{Z}_{2}^{n},\Omega_{f})$ is called
\emph{hypercube}. (For example, the leftmost graph in Figure 1). The
\emph{adjacency matrix} of $X(\mathbb{Z}_{2}^{n},\Omega_{f})$ is the
$2^{n}\times2^{n}$ matrix
\[
A_{f}=\sum_{w\in\Omega_{f}}\rho_{reg}(w),
\]
where $\rho_{reg}(x)$ is the regular (permutation) representation of
$w\in\Omega_{f}$. In particular, if $w=w_{1}w_{2}\cdots w_{n}$ then%
\[
\rho_{reg}(w)=\bigotimes_{i=1}^{n}\sigma_{x}^{w_{i}},
\]
where $\sigma_{x}$ is a Pauli matrix. It is clear that $A_{f}$ commutes with
the adjacency matrix of any other cubelike graph, given that the group
$\mathbb{Z}_{2}^{n}$ is abelian.

Now, let us choose a bijection between vertices of $X(\mathbb{Z}_{2}%
^{n},\Omega_{f})$ and the elements of the standard basis $|1\rangle
,|2\rangle,...,|N\rangle$ of an Hilbert space $\mathcal{H}\cong\mathbb{C}^{N}%
$, where $N=2^{n}$. This is the usual space of $n$ qubits. If we look at the
single excitation case in the XY model, the evolution of a network of spin
$1/2$ quantum mechanical particles on the vertices of $X(\mathbb{Z}_{2}%
^{n},\Omega_{f})$ can be seen as induced by the adjacency matrix $A_{f}$,
which then plays the role of an Hamiltonian (for details see \cite{fa} or
\cite{ch}). On the light of this observation, given two vectors $a,b\in
\mathbb{Z}_{2}^{n}$, the (unnormalized) \emph{transition amplitude} between
$a$ and $b$ induced by $A_{f}$ is the expression
\begin{align}
T(a,b)  &  =\langle b|e^{-iA_{f}t}|a\rangle\\
&  =\sum_{w\in\mathbb{Z}_{2}^{n}}(-1)^{a^{T}w}e^{-i\lambda_{w}t}(-1)^{b^{T}%
w}\nonumber\\
&  =\sum_{w\in\mathbb{Z}_{2}^{n}}(-1)^{(a\oplus b)^{T}w}e^{-i\lambda_{w}%
t},\nonumber
\end{align}
where $t\in\mathbb{R}^{+}$. The \emph{fidelity} of state transfer between $a$
and $b$ is then $F(a,b)=\frac{1}{2^{n}}\left\vert T(a,b)\right\vert $. By
definition, the evolution under $A_{f}$ is \emph{periodic} if there is
$t\in\mathbb{R}^{+}$ such that $F(a,a)=1$ for every $a\in\mathbb{Z}_{2}^{n}$.
Graph-theoretic properties responsible of periodic evolution (sometime also
called \emph{perfect revival}) have been considered in the literature
(\cite{ge, god, mu, sax, vo}).

With the next proposition, we show that every network modeled by a cubelike
graph has a periodic evolution. The period is $\pi$ and it does not depend on
the number of vertices of the graph. Equivalently, it does not depend on the
dimension of $\mathcal{H}$.

\begin{theorem}
\label{the}Let $X(\mathbb{Z}_{2}^{n},\Omega_{f})$ be a cubelike graph and let
$a,b\in\mathbb{Z}_{2}^{n}$.

\begin{enumerate}
\item For $t=\pi$, we have $F(a,b)=1$ if and only if $a=b$.

\item For $t=\pi/2$, we have $F(a,b)=1$ if $a\oplus b=u$ and $u=\bigoplus
_{w\in\Omega_{f}}w\neq\mathbf{0}$.
\end{enumerate}
\end{theorem}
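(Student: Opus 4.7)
The plan is to diagonalize $A_f$ explicitly and then read off both statements directly from the formula for $T(a,b)$ given in the excerpt. Each permutation matrix $\rho_{reg}(s) = \bigotimes_i \sigma_x^{s_i}$ is diagonal in the character basis $\{|w\rangle\}_{w \in \mathbb{Z}_2^n}$ with eigenvalue $(-1)^{w^T s}$, so $\lambda_w = \sum_{s \in \Omega_f}(-1)^{w^T s}$. Writing $\lambda_w = d - 2k_w$, where $k_w = |\{s \in \Omega_f : w^T s = 1\}|$, records the crucial parity $\lambda_w \equiv d \pmod 2$ that drives both parts.

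For part (1), at $t = \pi$ this parity collapses the exponential to $e^{-i\pi \lambda_w} = (-1)^d$, independent of $w$. Substituting into the formula yields $T(a,b) = (-1)^d \sum_{w} (-1)^{(a \oplus b)^T w}$, and orthogonality of the characters of $\mathbb{Z}_2^n$ gives $2^n$ if $a = b$ and $0$ otherwise; dividing by $2^n$ delivers the stated equivalence.

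For part (2), at $t = \pi/2$ the same decomposition gives $e^{-i\pi\lambda_w/2} = (-i)^{d-2k_w} = (-i)^d (-1)^{k_w}$, and the key step is the arithmetic identity
\[
(-1)^{k_w} = (-1)^{w^T u}, \qquad u = \bigoplus_{s \in \Omega_f} s,
\]
obtained by reducing $k_w = \sum_{s \in \Omega_f}(w^T s)$ modulo $2$. Substituting turns the amplitude into $T(a,b) = (-i)^d \sum_w (-1)^{(a \oplus b \oplus u)^T w}$, and orthogonality again forces $|T(a,b)| = 2^n$ precisely when $a \oplus b = u$. I do not anticipate a genuine obstacle; the only nontrivial manipulation is this mod-$2$ reduction, which is exactly what forces the vector $u$ into the picture and explains why the theorem is stated in terms of this particular $u$.
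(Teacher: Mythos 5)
Your proof is correct, and while the high-level strategy (diagonalize $A_f$ in the character basis, exploit the parity of the eigenvalues, finish with character orthogonality) is the same as the paper's, the way you establish the key parity fact is genuinely different and cleaner. The paper devotes two technical lemmas to showing that $\lambda_v=d-4k_v$ when $u^Tv$ is even and $\lambda_v=d-4k_v\pm2$ when $u^Tv$ is odd, proved by singling out an element $a\in\Omega_f$, writing it as the XOR of the others, and splitting into cases according to whether $\bigoplus_{w\in\Omega_f}w$ is zero, and whether $u$ does or does not belong to $\Omega_f$. You bypass all of this with the single congruence
\[
k_w \;=\; \bigl|\{s\in\Omega_f : w^Ts=1\}\bigr| \;\equiv\; \sum_{s\in\Omega_f} w^Ts \;=\; w^T\Bigl(\bigoplus_{s\in\Omega_f}s\Bigr) \;=\; w^Tu \pmod 2,
\]
which follows from linearity of the scalar product over $\mathbb{Z}_2$ and holds uniformly, with no case distinction on whether $u=\mathbf{0}$ or $u\in\Omega_f$. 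Writing $\lambda_w=d-2k_w$ then gives $e^{-i\pi\lambda_w/2}=(-i)^d(-1)^{k_w}=(-i)^d(-1)^{u^Tw}$ in one step, from which both parts of the theorem (and in fact the ``only if'' direction of part 2 among pairs with $a\neq b$, plus the $u=\mathbf{0}$ case handled separately in the paper) drop out at once. What the paper's lemmas buy in exchange for the extra work is slightly finer information -- the explicit mod-$4$ form $d-4k_v$ of the spectrum with bounds on $k_v$ -- which is of independent interest for periodicity questions but is not needed for the theorem itself. Your route is the more economical one for the stated result.
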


As a simple consequence of this statement, we have various ways to
\emph{route} information between any two nodes of a network whose vertices
correspond to the elements of $\mathbb{Z}_{2}^{n}$. Let $f$ be a Boolean
function such that $\Omega_{f}=\{w_{1},...,w_{r}\}$ is a generating set of
$\mathbb{Z}_{2}^{n}$. Let $\bigoplus_{w_{i}\in\Omega_{f}}w_{i} = w
\neq\mathbf{0}$. Let us define $C=\{w,w_{1},...,w_{r}\}$ and $C_{i}%
=C\backslash w_{i}$. Since the sum of the elements of $C_{i}$ is nonzero, the
Cayley graph $X(\mathbb{Z}_{2}^{n},C_{i})$ has PST between $a$ and $b$ such
that $a\oplus b=w_{i}$ at time $\pi/2$. The simplest case arises when we chose
$r=n$ and take the vectors $w_{1},...,w_{r}$ to be the standard basis of
$\mathbb{Z}_{2}^{n}$. Then $X(\mathbb{Z}_{2}^{n},C)$ is the \emph{folded }%
$d$\emph{-cube} and, by using a suitable sequence of the graphs $X(\mathbb{Z}%
_{2}^{n},C_{i})$, we can arrange PST from the zero vector to any desired
element of $\mathbb{Z}_{2}^{n}$.

For example, consider the case $d=3$. We write $w_{1}=(100)$, $w_{2}=(010)$
and $w_{3}=(001)$. Then $w=(111)$. The graph $X(\mathbb{Z}_{2}^{3},C)$ is
illustrated in Figure 1 -- left. Since $w_{1}=w_{2}\oplus w_{3}\oplus w$,
there is PST between $000$ and $w_{1}=100$ at time $\pi/2$, given that
$000\oplus100=100$ (see Figure 2 -- right). Also, there is PST for the pairs
$\{010,110\}$, $\{001,101\}$ and $\{011,111\}$. Notice that $X(\mathbb{Z}%
_{2}^{n},C_{1})$ is isomorphic to the $3$-dimensional hypercube.%

\begin{figure}
[h]
\begin{center}
\includegraphics[
height=1.8795in,
width=3.4406in
]%
{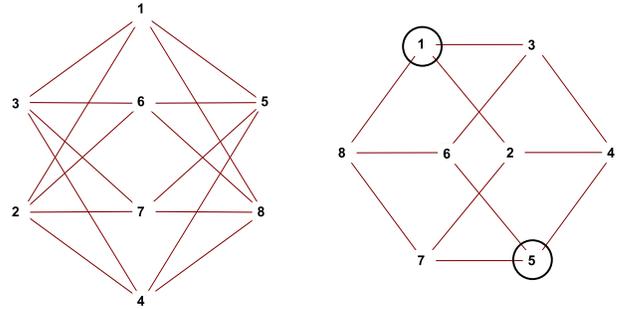}%
\caption{Left: The graph $X(\mathbb{Z}_{2}^{n},C)$, where
$C=\{(100),(010),(001),(111)\}$. This graph is isomorphic to the complete
bipartite graph $K_{4,4}$. Right: The graph $X(\mathbb{Z}_{2}^{n},C_{1})$,
where $C_{1}=\{(010),(001),(111)\}$. This graph turns out to be isomorphic to
the hypercube of dimension $3$. Since $100=010\oplus001\oplus111$, there is
PST between $000$ and $100$ at time $\pi/2$, given that $000\oplus100=100$.
Also, there is PST for the pairs $\{010,110\}$, $\{001,101\}$ and
$\{011,111\}$.}%
\end{center}
\end{figure}

For generic dimension, the hypercube is $X(\mathbb{Z}_{2}^{n},E)$, where
$E=\{(10...0),(010...0),...,(0...01)\}$ and $\overrightarrow{J_{n}}%
=\bigoplus_{w\in E}w$, the all-ones vector of length $2^{n}$. Since the
Hamming distance between two different elements $a,b\in\mathbb{Z}_{2}^{n}$ is
exactly their distance in $X(\mathbb{Z}_{2}^{n},E)$, we have PST between any
two antipodal vertices of the hypercube, as it was already observed in
\cite{ch} and \cite{fa}. Recall that the distance between two vertices in a
graph is the length of the geodesic (equivalently, the shortest path)
connecting the vertices. Two vertices are said to be \emph{antipodal} if their
distance is the \emph{diameter} of the graph, \emph{i.e.}, the longest among
all the geodesics. Antipodal vertices in a Cayley graphs are connected via a
sequence of all elements of the Cayley set. It remains as an open problem to
verify that whenever there is PST between two vertices of a cubelike graph
then the vertices are antipodal.

\section{Proof of the theorem}

The \emph{abstract Fourier transform} of a Boolean function $f$ is the
rational valued function $f^{\ast}:\mathbb{Z}_{2}^{n}\rightarrow\mathbb{Q}$
which defines the coefficients of $f$ with respect to the orthonormal basis of
the functions%
\[
Q_{w}(x)=(-1)^{w^{T}x},
\]
that is,
\[
f^{\ast}(w)=2^{-n}\sum_{x\in\mathbb{Z}_{2}^{n}}(-1)^{w^{T}x}f(x).
\]
Then
\[
f(x)=\sum_{w\in\mathbb{Z}_{2}^{n}}(-1)^{w^{T}x}f^{\ast}(w)
\]
is the Fourier expansion of $f$. Note that the zero-order Fourier coefficient
is equal to the probability that the function takes the value $1$,
\emph{i.e.}, $f^{\ast}(\mathbf{0)}=\frac{d}{2^{n}}$, while the other Fourier
coefficients measure the correlation between the function and the parity of
subsets of its arguments.

Using a vector-representation for the functions $f$ and $f^{\ast}$, and
considering the natural ordering of the binary vectors $w\in\mathbb{Z}_{2}%
^{n}$, one can derive a convenient matrix formulation for the transform pair:
$f=H_{n}\,f^{\ast}$ and $f^{\ast}=\frac{1}{2^{n}}H_{n}\,f$, where $H_{n}$ is
the Hadamard transform matrix. Given a function $f:\mathbb{Z}_{2}%
^{n}\rightarrow\mathbb{Z}_{2}$, the set $\Omega_{f}$ defines the Cayley graph
$X(\mathbb{Z}_{2}^{n},\Omega_{f})$, whose spectrum coincides, up to a factor
$2^{n}$, with the Fourier spectrum of the function: $\frac{1}{2^{n}}H_{n}%
A_{f}H_{n}=D_{f}$, where $A_{f}$ is the adjacency matrix of $X(\mathbb{Z}%
_{2}^{n},\Omega_{f})$ and $D_{f}$ is an $2^{n}\times2^{n}$ diagonal matrix. In
particular, for $w\in\mathbb{Z}_{2}^{n}$, we have $\lambda_{w}=2^{n}f^{\ast
}(w)$. Theorem \ref{the} needs the following two technical lemmas.

\begin{lemma}
\label{Somma0} Let $f$ be a Boolean function such that $\bigoplus_{w\in
\Omega_{f}}w=\mathbf{0}$. Then, for all $v\in\mathbb{Z}_{2}^{n}$,
\[
\lambda_{v}=d-4k_{v},
\]
where $d=|\Omega_{f}|$ and $k_{v}\in\mathbb{N}$, $0\leq k_{v}\leq\lfloor
d/2\rfloor$.
\end{lemma}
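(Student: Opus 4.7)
The plan is to work from the explicit formula for the eigenvalues. From the diagonalization $\frac{1}{2^n}H_n A_f H_n = D_f$ together with $\lambda_w = 2^n f^{\ast}(w)$, one reads off
\[
\lambda_v \;=\; \sum_{x\in\mathbb{Z}_2^n}(-1)^{v^T x}f(x) \;=\; \sum_{w\in\Omega_f}(-1)^{v^T w}.
\]
So $\lambda_v$ is just a signed count over $\Omega_f$. I would split $\Omega_f$ into $\Omega_f^0(v)=\{w\in\Omega_f:v^Tw=0\}$ and $\Omega_f^1(v)=\{w\in\Omega_f:v^Tw=1\}$, with sizes $a$ and $b$ respectively. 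Then $a+b=d$ and $\lambda_v=a-b=d-2b$, so the whole statement reduces to showing that $b$ is even, at which point we set $k_v=b/2$ and get $\lambda_v=d-4k_v$ with $0\le k_v\le\lfloor d/2\rfloor$ automatically from $0\le b\le d$.

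The parity of $b$ is where the hypothesis $\bigoplus_{w\in\Omega_f}w=\mathbf{0}$ enters, and this is essentially the only real step. Taking the scalar product with $v$ on both sides (the scalar product being $\mathbb{Z}_2$-bilinear) gives
\[
0 \;=\; v^T\!\Bigl(\bigoplus_{w\in\Omega_f}w\Bigr) \;=\; \bigoplus_{w\in\Omega_f} v^T w \pmod{2},
\]
and the right-hand side is exactly the parity of $|\Omega_f^1(v)|=b$. Hence $b$ is even, which is what we need.

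I do not expect any real obstacle: the argument is a two-line parity count once the eigenvalue formula is written out. The only thing to be mildly careful about is distinguishing the $\mathbb{Z}_2$-valued inner product $v^T w$ (which appears in the exponent of $-1$ and in the XOR sum) from the integer-valued count, but since $(-1)^{v^T w}$ depends only on $v^T w \bmod 2$ this poses no issue.
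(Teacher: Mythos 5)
Your proof is correct and takes essentially the same route as the paper's: both express $\lambda_v$ as the signed count $\sum_{w\in\Omega_f}(-1)^{v^Tw}$ and use the hypothesis $\bigoplus_{w\in\Omega_f}w=\mathbf{0}$ as a parity constraint on the number of $w\in\Omega_f$ with $v^Tw$ odd. Your version is in fact slightly more streamlined --- applying $v^T(\cdot)$ to the whole XOR identity shows directly that $|\{w\in\Omega_f:\ v^Tw=1\}|$ is even, whereas the paper isolates an element $a$, rewrites $(-1)^{a^Tv}$ as $\prod_{w\in\Omega_f\setminus\{a\}}(-1)^{w^Tv}$, and does a case split on the parity of $n_v$; the two arguments encode the same parity fact.
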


\begin{proof}
Let $a\in\Omega_{f}$, $a\neq\mathbf{0}$, and let $\Omega_{f}^{\prime}%
=\Omega\setminus\{a\}$. Since $\bigoplus_{w\in\Omega_{f}}w=\mathbf{0}$, we
have that $a=\bigoplus_{w\in\Omega_{f}^{\prime}}w$. Now observe that, for any
$v\in\mathbb{Z}_{2}^{n}$,
\begin{align*}
\lambda_{v}  &  =\sum_{w\in\mathbb{Z}_{2}^{n}}(-1)^{w^{T}v}f(w)=\sum
_{w\in\Omega_{f}}(-1)^{w^{T}v}\\
&  =\sum_{w\in\Omega_{f}^{\prime}}(-1)^{w^{T}v}+(-1)^{a^{T}v}\\
&  =\sum_{w\in\Omega_{f}^{\prime}}(-1)^{w^{T}v}+(-1)^{(\bigoplus_{w\in
\Omega_{f}^{\prime}}w)^{T}v}\\
&  =\sum_{w\in\Omega_{f}^{\prime}}(-1)^{w^{T}v}+\prod_{w\in\Omega_{f}^{\prime
}}(-1)^{w^{T}v}\,.
\end{align*}
Let $n_{v}=|\{w\in\Omega^{\prime}\ |\ (-1)^{w^{T}v}=-1\}|$, and $p_{v}%
=|\{w\in\Omega^{\prime}\ |\ (-1)^{w^{T}v}=1\}|$. Note that $p_{v}=|\Omega
_{f}^{\prime}|-n_{v}=(d-1)-n_{v}$. We have
\[
\lambda_{v}=p_{v}-n_{v}+(-1)^{n_{v}}=d-1-2n_{v}+(-1)^{n_{v}}.
\]
Thus, if $n_{v}$ is an even number, $n_{v}=2k_{v}$, with $k_{v}\in\mathbb{N}$,
we get
\[
\lambda_{v}=d-1-4k_{v}+1=d-4k_{v}.
\]
Otherwise, if $n_{v}=2k_{v}-1$ is odd, we get
\[
\lambda_{v}=d-1-2(2k_{v}-1)-1=d-4k_{v}.
\]
Since, for all $v$, $-d \leq\lambda_{v}\leq d$, we finally get $0 \leq
k_{v}\leq\lfloor d/2\rfloor$.
\end{proof}

\begin{lemma}
\label{SommaNon0} Let $f$ be a Boolean function such that $\bigoplus
_{w\in\Omega_{f}}w=u\neq\mathbf{0}$.

\begin{enumerate}
\item If $u\not \in \Omega_{f}$, then, for all $v\in\mathbb{Z}_{2}^{n}$,
\[
\lambda_{v}=\left\{
\begin{array}
[c]{ll}%
d-4k_{v}, & \mbox{\em if }u^{T}v\mbox{\em \ is even};\\
d-4k_{v}+2, & \mbox{\em if }u^{T}v\mbox{\em \ is odd};
\end{array}
\right.
\]
where $d=|\Omega_{f}|$ and $k_{v}\in\mathbb{N}$, $0 \le k_{v} \le\lfloor
\frac{d+1}{2} \rfloor$.

\item If $u\in\Omega_{f}$, then, for all $v\in\mathbb{Z}_{2}^{n}$,
\[
\lambda_{v}=\left\{
\begin{array}
[c]{ll}%
d-4k_{v}, & \mbox{\em if }u^{T}v\mbox{\em \ is even};\\
d-4k_{v}-2, & \mbox{\em if }u^{T}v\mbox{\em \ is odd};
\end{array}
\right.
\]
where $d=|\Omega_{f}|$ and $k_{v}\in\mathbb{N}$, $0 \le k_{v} \le\lfloor
\frac{d-1}{2} \rfloor$.
\end{enumerate}
\end{lemma}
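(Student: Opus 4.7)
The plan is to reduce both cases of Lemma \ref{SommaNon0} to the zero-XOR situation already handled by Lemma \ref{Somma0}. In each case I modify $\Omega_{f}$ by a single element so that the new set has XOR equal to $\mathbf{0}$, apply Lemma \ref{Somma0} to this modified set, and then track the change in $\lambda_{v}$ induced by the modification.

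For part 1, where $u\notin\Omega_{f}$, I set $\tilde{\Omega}=\Omega_{f}\cup\{u\}$. Then $\bigoplus_{w\in\tilde{\Omega}}w=u\oplus u=\mathbf{0}$ and $|\tilde{\Omega}|=d+1$. Writing $\tilde{\lambda}_{v}$ for the eigenvalue of the Cayley graph $X(\mathbb{Z}_2^n,\tilde{\Omega})$, the formula $\lambda_{v}=\sum_{w\in\Omega_{f}}(-1)^{w^{T}v}$ immediately gives $\tilde{\lambda}_{v}=\lambda_{v}+(-1)^{u^{T}v}$. Lemma \ref{Somma0} applied to $\tilde{\Omega}$ yields $\tilde{\lambda}_{v}=(d+1)-4k_{v}$ with $0\le k_{v}\le\lfloor (d+1)/2\rfloor$, hence $\lambda_{v}=(d+1)-4k_{v}-(-1)^{u^{T}v}$. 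Splitting by the parity of $u^{T}v$ gives $\lambda_{v}=d-4k_{v}$ in the even case and $\lambda_{v}=d+2-4k_{v}$ in the odd case, as required.

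For part 2, where $u\in\Omega_{f}$, I instead remove $u$: take $\tilde{\Omega}=\Omega_{f}\setminus\{u\}$. The XOR again becomes $\mathbf{0}$ and the size drops to $d-1$; now $\tilde{\lambda}_{v}=\lambda_{v}-(-1)^{u^{T}v}$. Applying Lemma \ref{Somma0} to $\tilde{\Omega}$ gives $\tilde{\lambda}_{v}=(d-1)-4k_{v}$ with $0\le k_{v}\le\lfloor (d-1)/2\rfloor$, and solving for $\lambda_{v}$ followed by the parity split on $u^{T}v$ delivers the second formula.

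I do not foresee a substantial obstacle: each part is a one-line reduction once the identity linking $\lambda_{v}$ and $\tilde{\lambda}_{v}$ is written down. The only step that requires a moment of care is checking that after the shift by $\pm 1$ the integer $k_{v}$ still ranges exactly over the interval asserted in the statement; this follows from the bound $|\lambda_{v}|\le d$ together with the corresponding bounds supplied by Lemma \ref{Somma0} at sizes $d\pm 1$, with no off-by-one slack.
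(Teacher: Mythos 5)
Your proposal is correct and follows essentially the same route as the paper: in both cases you pass to the modified set $\Omega_f\cup\{u\}$ (resp.\ $\Omega_f\setminus\{u\}$), whose XOR vanishes, apply Lemma \ref{Somma0}, and use the identity $\tilde\lambda_v=\lambda_v\pm(-1)^{u^Tv}$ to recover $\lambda_v$, with the stated ranges for $k_v$ inherited from Lemma \ref{Somma0} at sizes $d\pm1$. Nothing further is needed.
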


\begin{proof}
\noindent\emph{(1.) }Consider the function $g$ such that $\Omega_{g}%
=\Omega_{f}\cup\{u\}$. Let $\mu_{v}$ denote the eigenvalues of the Cayley
graph associated to $g$. As $\bigoplus_{w\in\Omega_{g}}w=\mathbf{0}$, from
Lemma \ref{Somma0}, we get
\[
\mu_{v}=|\Omega_{g}|-4k_{v}=d+1-4k_{v},
\]
with $0 \le k_{v} \le\lfloor\frac{d+1}{2} \rfloor$. Now, observe that
\begin{align*}
\mu_{v}  &  =\sum_{w\in\mathbb{Z}_{2}^{n}}(-1)^{w^{T}v}g(w)=\sum_{w\in
\Omega_{g}}(-1)^{w^{T}v}\\
&  =\sum_{w\in\Omega_{f}}(-1)^{w^{T}v}+(-1)^{u^{T}v}=\lambda_{v}+(-1)^{u^{T}%
v}.
\end{align*}
Thus,%
\[
\lambda_{v}=\mu_{v}-(-1)^{u^{T}v}=d+1-4k_{v}-(-1)^{u^{T}v},
\]
and the thesis immediately follows.

\medskip\noindent\emph{(2.) }Consider the function $g$ such that $\Omega
_{g}=\Omega_{f}\setminus\{u\}$. Let $\mu_{v}$ denote the eigenvalues of the
Cayley graph associated to $g$. As $u=\bigoplus_{w\in\Omega_{f}}%
w=\bigoplus_{w\in\Omega_{g}}\oplus u$, we have $\bigoplus_{w\in\Omega_{g}%
}w=\mathbf{0}$. By applying Lemma \ref{Somma0}, we obtain
\[
\mu_{v}=|\Omega_{g}|-4k_{v}=d-1-4k_{v},
\]
with $0 \le k_{v} \le\lfloor\frac{d-1}{2} \rfloor$. Now, as in \emph{(1.)}
observe that
\begin{align*}
\lambda_{v}  &  =\sum_{w\in\mathbb{Z}_{2}^{n}}(-1)^{w^{T}v}f(w)=\sum
_{w\in\Omega_{f}}(-1)^{w^{T}v}\\
&  =\sum_{w\in\Omega_{g}}(-1)^{w^{T}v}+(-1)^{u^{T}v}=\mu_{v}+(-1)^{u^{T}v}.
\end{align*}
So we get%
\[
\lambda_{v}=d-1-4k_{v}+(-1)^{u^{T}v},
\]
concluding the proof of the lemma.
\end{proof}

\bigskip

\begin{proof}
[Proof of Theorem 1]\emph{(1.) }All eigenvalues $\lambda_{w}$ are integers
with the same parity. In particular, they are all odd if $d=|\Omega_{f}|$ is
odd, and all even, otherwise. Thus, by Eq. (1), for $t=\pi$ we have%
\begin{align*}
T(a,b)  &  =\sum_{w\in\mathbb{Z}_{2}^{n}}(-1)^{(a\oplus b)^{T}w}%
e^{-i\lambda_{w}\pi}\\
&  =\sum_{w\in\mathbb{Z}_{2}^{n}}(-1)^{(a\oplus b)^{T}w}\cos(\lambda_{w}\pi)\\
&  =\sum_{w\in\mathbb{Z}_{2}^{n}}(-1)^{(a\oplus b)^{T}w}(-1)^{\lambda_{w}}\\
&  =(-1)^{d}\sum_{w\in\mathbb{Z}_{2}^{n}}(-1)^{(a\oplus b)^{T}w}.
\end{align*}
Thus the statement follows since
\[
\sum_{w\in\mathbb{Z}_{2}^{n}}(-1)^{(a\oplus b)^{T}w}=\left\{
\begin{tabular}
[c]{ll}%
$2^{n},$ & $a\oplus b=\mathbf{0}$, \emph{i.e.}, $a=b$;\\
$0,$ & otherwise.
\end{tabular}
\ \ \ \ \ \ \right.
\]
\emph{(2.) }First, let us suppose that $f$ is such that $\bigoplus_{w\in
\Omega_{f}}w=\mathbf{0}$. Then, applying Lemma \ref{Somma0}, we can see that
\begin{align*}
T(a,b)  &  =\sum_{w\in\mathbb{Z}_{2}^{n}}(-1)^{(a\oplus b)^{T}w}%
e^{-i\lambda_{w}\pi/2}\\
&  =\sum_{w\in\mathbb{Z}_{2}^{n}}(-1)^{(a\oplus b)^{T}w}e^{-i(d-4k_{w})\pi
/2}\\
&  =e^{-id\pi/2}\sum_{w\in\mathbb{Z}_{2}^{n}}(-1)^{(a\oplus b)^{T}w}%
e^{i2k_{w}\pi}\\
&  =e^{-id\pi/2}\sum_{w\in\mathbb{Z}_{2}^{n}}(-1)^{(a\oplus b)^{T}w}.
\end{align*}
The thesis is verified, since $\sum_{w\in\mathbb{Z}_{2}^{n}}(-1)^{(a\oplus
b)^{T}w}=2^{n}$ if and only if $a\oplus b=\mathbf{0}=\bigoplus_{w\in\Omega
_{f}}w$, but here $a\neq b$. The remaining case is when $\bigoplus_{w\in
\Omega_{f}}w=u\neq\mathbf{0}$. Applying Lemma \ref{SommaNon0}, we can write
\begin{align*}
T(a,b)  &  =\sum_{%
\genfrac{}{}{0pt}{}{w\in\mathbb{Z}_{2}^{n}}{u^{T}w\mbox{\tiny \  even}}%
}(-1)^{(a\oplus b)^{T}w}e^{-i(d-4k_{w})\pi/2}\\
&  +\sum_{%
\genfrac{}{}{0pt}{}{w\in\mathbb{Z}_{2}^{n}}{u^{T}w\mbox{\tiny \  odd}}%
}(-1)^{(a\oplus b)^{T}w}e^{-i(d-4k_{w}\pm2)\pi/2}\\
&  =e^{-id\pi/2}\left(  \sum_{%
\genfrac{}{}{0pt}{}{w\in\mathbb{Z}_{2}^{n}}{u^{T}w\mbox{\tiny \ even}}%
}(-1)^{(a\oplus b)^{T}w}e^{i 2k_{w}\pi}\right. \\
&  \left.  -\sum_{%
\genfrac{}{}{0pt}{}{w\in\mathbb{Z}_{2}^{n}}{u^{T}w\mbox{\tiny \  odd}}%
}(-1)^{(a\oplus b)^{T}w}e^{i2k_{w}\pi}\right) \\
&  =e^{-id\pi/2}\sum_{w\in\mathbb{Z}_{2}^{n}}(-1)^{(a\oplus b)^{T}%
w}(-1)^{u^{T}w}\\
&  =e^{-id\pi/2}\sum_{w\in\mathbb{Z}_{2}^{n}}(-1)^{(a\oplus b\oplus u)^{T}w}.
\end{align*}
The statement holds since $\sum_{w\in\mathbb{Z}_{2}^{n}}(-1)^{(a\oplus b\oplus
u)^{T}w}=2^{n}$ if and only if $a\oplus b\oplus u=\mathbf{0}$, \emph{i.e.},
$a\oplus b=u=\bigoplus_{w\in\Omega_{f}}w$.
\end{proof}

\section{Conclusion}

We have given a necessary and sufficient condition for PST in quantum networks
modeled by a large class of cubelike graphs.

A special case is left open: when $\bigoplus_{w\in\Omega_{f}}w=\mathbf{0}$.
Numerical evidence suggests that cubelike graphs with this property do not
allow PST.

An application of our result is a straightforward method to distinguish if,
for a Boolean function $f$, we have $\bigoplus_{w\in\Omega_{f}}w\neq
\mathbf{0}$ or $\bigoplus_{w\in\Omega_{f}}w=\mathbf{0}$, when this promise
holds. By Theorem \ref{the}, we know that $\langle b|e^{-iA_{f}\pi/2}%
|a\rangle=1$ if $a\oplus b=w\neq\mathbf{0}$. If we let evolve the system under
the Hamiltonian $A_{f}$ for a time $\pi/2$, we then obtain the state
$|b\rangle$ by performing a von Newmann measurement \emph{w.r.t} the standard
basis of $\mathcal{H}$. When $n$ gets large, the probability of observing
$|b\rangle$ with distinct measurements tends to decrease exponentially if
$\bigoplus_{w\in\Omega_{f}}w=\mathbf{0}$.

On the other side, this problem is easy without the use of any quantum
technique. Beyond this trivial application, given the link between cubelike
graphs and Boolean functions, it is natural to ask weather quantum dynamics on
these graphs can help in getting useful information about the corresponding functions.

\bigskip

\noindent\emph{Acknowledgments. }We thank Sougato Bose, Andrew Childs and
Peter H\o yer for helpful discussion. Part of this work was carried on during
the conference \textquotedblleft Graph Theory and Quantum
Information:\ Emerging Connections\textquotedblright, held at Perimeter
Institute for Theoretical Physics (April 28 - May 2, 2008, Waterloo).

\end{document}